\newtheorem{prop}{Proposition}
\theoremstyle{remark}
\newtheorem{remark}{Remark}
\newcommand{\beq}{\begin{equation}}
\newcommand{\eeq}{\end{equation}}
\newcommand{\beqnn}{\begin{equation*}}
\newcommand{\eeqnn}{\end{equation*}}
\newcommand{\rd}{\partial}
\newcommand{\CC}{\mathbb{C}}
\newcommand{\PP}{\mathbb{P}}
\newcommand{\ZZ}{\mathbb{Z}}
\newcommand{\frakL}{\mathfrak{L}}
\newcommand{\bsx}{\boldsymbol{x}}
\newcommand{\bst}{\boldsymbol{t}}
\newcommand{\bsalpha}{\boldsymbol{\alpha}}
\newcommand{\bsbeta}{\boldsymbol{\beta}}
\newcommand{\bszero}{\boldsymbol{0}}
\newcommand{\Lbar}{\bar{L}}
\newcommand{\Nbar}{\bar{N}}
\newcommand{\Wbar}{\bar{W}}
\newcommand{\wbar}{\bar{w}}
\newcommand{\tbar}{\bar{t}}
\newcommand{\bstbar}{\bar{\bst}}
\begin{document}

\title{Extended lattice Gelfand-Dickey hierarchy}
\author{Kanehisa Takasaki\thanks{E-mail: takasaki@math.kindai.ac.jp}\\
{\normalsize Department of Mathematics, Kindai University}\\ 
{\normalsize 3-4-1 Kowakae, Higashi-Osaka, Osaka 577-8502, Japan}}
\date{}
\maketitle

\begin{abstract}
The lattice Gelfand-Dickey hierarchy is a lattice version 
of the Gelfand-Dickey hierarchy.  A special case is the lattice 
KdV hierarchy.  Inspired by recent work of Buryak and Rossi, 
we propose an extension of the lattice Gelfand-Dickey hierarchy. 
The extended system has an infinite number of logarithmic flows  
alongside the usual flows.  We present the Lax, Sato and Hirota 
equations and a factorization problem of difference operators 
that captures the whole set of solutions.  The construction of 
this system resembles the extended 1D and bigraded Toda hierarchy, 
but exhibits several novel features as well. 
\end{abstract}

\begin{flushleft}
2010 Mathematics Subject Classification: 
14N35, 
37K10  
\\
Key words: 
cohomological field theory, Gelfand-Dickey hierarchy, 
KdV hierarchy, KP hierarchy, Toda hierarchy, logarithmic flow

\end{flushleft}

\newpage

\section{Introduction}

Some years ago, Buryak and Rossi \cite{BR18} pointed out 
that an extension of the lattice KdV hierarchy 
emerges in a variant of cohomological field theory 
of Witten's $r$-spin class.  
The lattice KdV hierarchy (the ``discrete'' KdV hierarchy 
in the terminology of Buryak and Rossi) is a system 
of Lax equations for a difference operator of second order.  
The extended part comprises evolution equations 
with an extra set of time variables and is constructed 
in a rather abstract way.  Finding a more explicit 
construction of these extra flows, as well as 
a higher spin generalization on the basis 
of the Gelfand-Dickey hierarchy, 
is an open problem raised therein.  

In this paper, we present an extension of the lattice 
Gelfand-Dickey (GD for short) hierarchy.  
Although evidence is fragmentary, we believe that this is 
the integrable structure sought for by Buryak and Rossi.  
The lattice GD hierarchy is a reduction 
of the lattice KP hierarchy (aka the discrete KP hierarchy, 
the modified KP hierarchy, etc.) \cite{Dickey-book}.   
Our extended lattice GD hierarchy is obtained 
by adding an infinite number of ``logarithmic flows'' 
to the lattice GD hierarchy.  

Logarithmic flows have been known for the 1D Toda hierarchy 
and its bigraded generalizations.  The name reflects 
the fact that the Lax equations contain the logarithm 
of the Lax operator as a building block.  
These exotic flows were first discovered 
in the large-$N$ limit of a matrix model 
of topological string theory on $\CC\PP^1$ \cite{EY94,EHY95}.  
The Lax equations were accordingly formulated 
in the dispersionless limit of the 1D Toda hierarchy. 
A dispersive version was constructed later \cite{CDZ04} 
and further generalized to the bigraded 
Toda hierarchy \cite{Carlet06}.  The meaning 
of the logarithm of the relevant Lax operator, 
which is a difference operator, was also clarified therein. 
Moreover, Hirota equations for the tau function 
in the presence of logarithmic flows 
were also subsequently obtained 
\cite{Milanov07,Li-etal09,Takasaki10}.  
These extended 1D and bigraded Toda hierarchies 
are used to capture the all-genus Gromov-Witten 
partition functions of $\CC\PP^1$ \cite{DZ03,Milanov06} 
and its two-point orbifolds \cite{MT08,CvdL13}. 

The construction of logarithmic flows 
in the lattice GD hierarchy is mostly parallel 
to our previous work \cite{Takasaki10} 
on the extended 1D Toda hierarchy.  The somewhat unusual 
Lax formalism proposed therein turns out to be applicable 
to the lattice GD hierarchy as well.  Let us stress 
that the construction of logarithmic flows 
by Carlet et al. \cite{CDZ04,Carlet06} is not very suited 
for this purpose.  The bigraded Toda hierarchy is labelled 
by two positive integers $N$ and $\Nbar$. 
The 1D Toda hierarchy is the case of $N = \Nbar = 1$.  
The lattice GD hierarchy amounts to the case of $\Nbar = 0$, 
which is exceptional in the perspectives of 
the bigraded Toda hierarchy.  

This paper is organized as follows.  Section 2 is a review 
of the notion of the lattice KP hierarchy 
and its reduction to the lattice GD hierarchy.  
Section 3 presents the construction of the extended 
lattice GD hierarchy in the language of the Lax and 
Sato equations.  Section 4 deals with the Hirota equations 
for the tau function.  Section 5 is devoted 
to the factorization problem.

\section{Lattice KP and GD hierarchies}

The lattice KP hierarchy may be thought of 
as a subsystem of the 2D Toda hierarchy 
with the second set $\bstbar = (\tbar_k)_{k=0}^\infty$ 
of the time variables $(\bst,\bstbar)$ 
being turned off to $\bstbar = \bszero$ 
(see the recent review \cite{Takasaki18} 
for generalities of the 2D Toda hierarchy). 
We now consider an $\hbar$-dependent version \cite{TT95} 
of these hierarchies.   
Let $s$ be the spatial coordinate therein, 
which is now understood to be a continuous variable, 
and $\Lambda$ be the shift operator 
\[
  \Lambda = e^{\hbar\rd_s}, \quad \rd_s = \rd/\rd s, 
\]
that acts on a function $f(s)$ of $s$ 
as $\Lambda f(s) = f(s + \hbar)$.  

The lattice KP hierarchy comprises the Lax equations 
\beq
  \hbar\frac{\rd L}{\rd t_k} = [B_k,L] = [L,B_k^{-}],\quad 
  k = 1,2,\ldots,
  \label{t-Laxeq}
\eeq
for the difference (or pseudo-difference) Lax operator 
\[
  L = \Lambda + \sum_{n=1}^\infty u_n\Lambda^{1-n}. 
\]
The coefficients $u_n$ depend on $\hbar$ as well as  
$s$ and the time variables $\bst = (t_k)_{k=1}^\infty$.  
The generators $B_k$ and $B_k^{-}$ of the flows are defined 
by the Lax operator as 
\[
  B_k = (L^k)_{\ge 0},\quad 
  B_k^{-} = (L^k)_{<0}, 
\]
where $(\quad)_{\ge 0}$ and $(\quad)_{<0}$ 
denote the projection 
\[
  \left(\sum_{n=-\infty}^\infty a_n\Lambda^n\right)_{\ge 0}
  = \sum_{n\ge 0}a_n\Lambda^n,\quad 
  \left(\sum_{n=-\infty}^\infty a_n\Lambda^n\right)_{<0}
  = \sum_{n<0}a_n\Lambda^n. 
\]
to the non-negative and negative powers of $\Lambda$. 

The Lax equations can be converted to the Sato equations 
\beq
  \hbar\frac{\rd W}{\rd t_k} 
  = (W\Lambda^k W^{-1})_{\ge 0}W - W\Lambda^k 
  = - (W\Lambda^k W^{-1})_{<0}W 
  \label{t-Satoeq}
\eeq
for the dressing operator 
\[
  W = 1 + \sum_{n=1}^\infty w_n\Lambda^{-n}. 
\]
$L$ and $B_k$ are thereby expressed as 
\[
  L = W\Lambda W^{-1},\quad 
  B_k = (W\Lambda^kW^{-1})_{\ge 0}. 
\]
Thus the Sato equations (\ref{t-Satoeq}) imply 
the Lax equations (\ref{t-Laxeq}).  
Conversely, given a Lax operator $L$ satisfying 
(\ref{t-Laxeq}), one can find a solution $W$ 
of (\ref{t-Satoeq}). 

The wave function 
\[
  \Psi = \left(1 + \sum_{n=1}^\infty w_nz^{-n}\right)
         z^{s/\hbar}e^{\xi(\bst,z)/\hbar}, \quad 
  \xi(\bst,z) = \sum_{k=1}^\infty t_kz^k, 
\]
satisfies the auxiliary linear equations 
\beq
  \hbar\frac{\rd\Psi}{\rd t_k} = B_k\Psi,\quad 
  L\Psi = z\Psi. 
  \label{t-lineq}
\eeq
The amplitude part of $\Psi$ can be expressed 
by the tau function $\tau = \tau(\hbar,s,\bst)$ as 
\[
  1 + \sum_{n=1}^\infty w_nz^{-n} 
  = \frac{\tau(\hbar,s,\bst - \hbar[z^{-1}])}{\tau(\hbar,s,\bst)},\quad 
  [z] = \left(\frac{z^k}{k}\right)_{k=1}^\infty. 
\]

Reduction to the lattice GD hierarchy \cite{Dickey-book} 
is induced by the reduction condition 
\beq
  (L^N)_{<0} = 0.
  \label{red-cond}
\eeq
The reduced Lax operator 
\beq
  \frakL = L^N = B_N = \Lambda^N + b_1\Lambda^{N-1} + \cdots + b_N 
  \label{redLaxop}
\eeq
satisfies the Lax equations 
\beq
  \hbar\frac{\rd\frakL}{\rd t_k} = [B_k,\frakL] = [\frakL,B_k^{-}], \quad 
  k = 1,2,\ldots, 
  \label{t-redLaxeq}
\eeq
of the lattice GD hierarchy.  $B_k$ and $B_k^{-}$ can be 
rewritten in terms of fractional powers of $\frakL$ as 
\[
  B_k = (\frakL^{k/N})_{\ge 0},\quad 
  B_k^{-} = (\frakL^{k/N})_{<0}. 
\]
The $N = 2$ case is the lattice KdV hierarchy.  
The existence of two expressions (\ref{t-redLaxeq}) 
of the Lax equations ensures that the reduced form 
(\ref{redLaxop}) of $\frakL$ is preserved by the flows.  
The flows with respect to $t_{kN}$'s become trivial, i.e., 
\beq
  \frac{\rd\frakL}{\rd t_{kN}} = 0, \quad k = 1,2,\ldots. 
\eeq
The Sato equations persist to take the same form 
\beq
  \hbar\frac{\rd W}{\rd t_k} = B_kW - W\Lambda^k = - B_k^{-}W  
  \label{t-redSatoeq}
\eeq
in the reduction.  The associated auxiliary 
linear equations read 
\beq
  \hbar\frac{\rd\Psi}{\rd t_k} = B_k\Psi,\quad 
  \frakL\Psi = z^N\Psi. 
  \label{t-redlineq}
\eeq

\begin{remark}
The usual KP and GD hierarchies are hidden 
in the lattice KP and GD hierarchies.  
The first time variable $t_1$ plays the role 
of spatial coordinate therein.  
Note that the lowest auxiliary linear equation 
\[
  \hbar\frac{\rd\Psi}{\rd t_1} = B_1\Psi, \quad 
  B_1 = \Lambda + b, 
\]
implies the relation 
\beq
  \Lambda\Psi = (\hbar\rd - b)\Psi,\quad 
  \rd = \rd/\rd t_1. 
  \label{Lam-rd-rel}
\eeq
One can thereby rewrite the difference operators 
in the auxiliary linear equations (\ref{t-lineq}) 
and (\ref{t-redlineq}) into differential 
or pseudo-differential operators with respect to $t_1$.  
In particular, the linear difference equation 
$\frakL\Psi = z^k\Psi$ can be thus converted 
to a linear differential equation of the form 
\beq
  Q\Psi = z^N\Psi,\quad 
  Q = (\hbar\rd)^N + a_2(\hbar\rd)^{N-2} + \cdots + a_N, 
\eeq
which may be thought of as the eigenvalue problem 
of the GD hierarchy.  The Lax equations (\ref{t-redLaxeq}) 
turn into Lax equations of the form 
\beq
  \hbar\frac{\rd Q}{\rd t_k} = [A_k,Q], \quad 
  k = 1,2,\ldots. 
\eeq
$A_k$'s are differential operators obtained 
from $B_k$'s just as $Q$ is converted from $\frakL$.  
This is exactly the usual GD hierarchy, in which 
the variable $s$ is left as a parameter.  
Shifting $s$ as $s \to s + \hbar$ amounts to 
B\"acklund-Darboux transformations of the GD hierarchy 
as the relation (\ref{Lam-rd-rel}) suggests. 
\end{remark}

\section{Extended lattice GD hierarchy}

We now extend the lattice GD hierarchy by 
an infinite number of logarithmic flows.  
Let $\bsx = (x_k)_{k=1}^\infty$ be the set 
of time variables of these flows.  

The Lax equations of the extended flows take 
seemingly the same form 
\beq
  \hbar\frac{\rd\frakL}{\rd x_k} 
  = \frakL^k\hbar\frac{\rd\frakL}{\rd s} + [P_k,\frakL] 
  = [\frakL^k\hbar\rd_s + P_k,\frakL], \quad 
  k = 1,2,\ldots, 
  \label{x-Laxeq}
\eeq
as those of the extended 1D/bigraded Toda hierarchy 
\cite{Takasaki10}.  The structure of $P_k$'s, 
however, is different from that case.  
In the present setup, $P_k$'s are difference operators 
of the form 
\[
  P_k = - \left(\frakL^k\hbar\frac{\rd W}{\rd s}W^{-1}\right)_{\ge 0}, 
\]
thus having no negative powers of $\Lambda$.  
These extended Lax equations and the foregoing 
Lax equations (\ref{t-redLaxeq}) constitute 
the extended lattice GD hierarchy. 

These Lax equations can be converted to the Sato equations 
\beq
  \hbar\frac{\rd W}{\rd x_k} 
  = \frakL^k\hbar\frac{\rd W}{\rd s} + P_kW 
  = (\frakL^k\hbar\rd_s + P_k)W - W\Lambda^{kN}\hbar\rd_s 
  \label{x-Satoeq}
\eeq
for the dressing operator $W$.  
The associated wave function is redefined as 
\[
  \Psi = \left(1 + \sum_{n=1}^\infty w_nz^{-n}\right)z^{s/\hbar}
       \exp\left(\hbar^{-1}\xi(\bst,z) + \hbar^{-1}\xi(\bsx,z^N)\log z\right). 
\]
The undressed wave function 
\[
  \Psi_0 = z^{s/\hbar}
    \exp\left(\hbar^{-1}\xi(\bst,z) + \hbar^{-1}\xi(\bsx,z^N)\log z\right)
\]
satisfies the linear equations 
\beq
  \hbar\frac{\rd\Psi_0}{\rd t_k} = \Lambda^k\Psi_0,\quad 
  \hbar\frac{\rd\Psi_0}{\rd x_k} = \Lambda^{kN}\hbar\rd_s\Psi_0.
  \label{vac-lineq}
\eeq
These equations are transformed, via the dressing 
relation $\Psi = W\Psi_0$, to the linear equations 
\beq
  \hbar\frac{\rd\Psi}{\rd t_k} = B_k\Psi,\quad 
  \hbar\frac{\rd\Psi}{\rd x_k} = (\frakL^k\hbar\rd_s + P_k)\Psi 
  \label{tx-lineq}
\eeq
for the dressed wave function$\Psi$.  
These equations and the eigenvalue problem 
$\frakL\Psi = z^N\Psi$ give the auxiliary linear equations 
of the extended lattice GD hierarchy.  

The presence of the logarithmic factors $\log z$ 
and $\log\Lambda = \hbar\rd_s$ in the wave functions 
and the auxiliary linear equations explains the name 
``logarithmic flows''.  Moreover the logarithm 
\[
  \log L = W\log\Lambda W^{-1} = W\hbar\rd_s W^{-1}
\]
of the Lax operator $L$, too, plays a role as follows. 

One can rewrite the product $L^{kN}\log L$ 
of $L^{kN}$ and $\log L$as 
\beq
  L^{kN}\log L 
  = W\Lambda^{kN}\hbar\rd_sW^{-1} 
  = \frakL^k\hbar\rd_s - \frakL^k\hbar\frac{\rd W}{\rd s}W^{-1}, 
\eeq
which splits into two pieces as 
\beq
  L^{kN}\log L = (\frakL^k\hbar\rd_s + P_k) + P_k^{-}, 
\eeq
where 
\[
  P_k^{-} = - \left(\frakL^k\hbar\frac{\rd W}{\rd s}W^{-1}\right)_{<0}. 
\]
The first piece $\frakL^k\hbar\rd_s + P_k$ 
is the generator of the $x_k$-flow 
in the Lax equation (\ref{x-Laxeq}) and 
the Sato equation (\ref{x-Satoeq}).  
Since $L^{kN}\log L$ commute with $\frakL$, 
one can rewrite the Lax equations as 
\beq
  \hbar\frac{\rd\frakL}{\rd x_k} = [\frakL,P_k^{-}]. 
  \label{x-dualLaxeq}
\eeq
In the same sense, the Sato equation turns out 
to have another expression as 
\beq
  \hbar\frac{\rd W}{\rd x_k} = - P_k^{-}W. 
  \label{x-dualSatoeq}
\eeq

The operators $\frakL^k\hbar\rd_s + P_k$ and $P_k^{-}$ may be 
thought of as the $(\quad)_{\ge 0}$ and $(\quad)_{<0}$ parts 
of $\frakL^k\log L = L^{kN}\log L$ 
if these projectors are extended to operators 
of the form $A\rd_s + B$, where $A$ and $B$ are 
genuine difference operators, as 
\[
  (A\rd_s + B)_{\ge 0} = (A)_{\ge 0}\rd_s + (B)_{\ge 0},\quad 
  (A\rd_s + B)_{<0} = (A)_{<0}\rd_s + (B)_{<0}. 
\]
Although being logically problematical, 
such an interpretation seems to be useful 
for the extended 1D/bigraded Toda hierarchies as well. 

\begin{remark}
The Lax operator of the bigraded Toda hierarchy 
of type $(N,\Nbar)$ is a difference operator of the form
\[
  \frakL = \Lambda^N + b_1\Lambda^{N-1} 
           + \cdots + b_{N+\Nbar}\Lambda^{-\Nbar}, 
  \quad b_{N+\Nbar} \not= 0, 
\]
where $N$ and $\Nbar$ are positive integers. 
The 1D Toda hierarchy is the case of $N = \Nbar = 1$.  
On the other hand, the Lax operator (\ref{redLaxop}) 
of the lattice GD hierarchy amounts to the case 
of $\bar{N} = 0$.  Carlet's construction 
of logarithmic flows \cite{Carlet06} is based 
on the averaged (or subtracted) logarithm 
\[
  \mathrm{Log}\frakL 
  = \frac{1}{2N}\log_{+}\frakL + \frac{1}{2\Nbar}\log_{-}\frakL 
  = \frac{1}{2}W\log\Lambda W^{-1} 
    - \frac{1}{2}\Wbar\log\Lambda\Wbar^{-1} 
\]
of $\frakL$, where $\Wbar$ is the second dressing operator 
of the 2D Toda hierarchy, and does not work 
in the $\Nbar = 0$ case.  To overcome this difficulty, 
we have adopted our previous construction of logarithmic flows 
\cite{Takasaki10} with suitable modifications.  
Let us note that our construction for the $\Nbar > 0$ case 
is equivalent to Carlet's construction. 
\end{remark}

\begin{remark}
The building blocks of the Lax, Sato and 
auxiliary linear equations of the $\bsx$-flows 
have the common factor $\hbar$.  Although 
this factor could be removed from these equations, 
we dare to keep it for symmetry with the equations 
of the $\bst$-flows. 
\end{remark}

\begin{remark}
The dual expressions (\ref{x-Laxeq}) and (\ref{x-dualLaxeq}) 
of the Lax equations imply that the reduced form (\ref{redLaxop}) 
of the Lax operator is preserved by the $\bsx$-flows 
as well as the $\bst$-flows.  Moreover one can show, 
in much the same way as in the case 
of the extended bigraded Toda hierarchy \cite{Carlet06}, 
that the Lax equations imply the zero-curvature equations 
\begin{align}
  \hbar\frac{\rd B_k}{\rd t_l} - \hbar\frac{\rd B_l}{\rd t_k} 
  + [B_k,B_l] &= 0, \\
  \hbar\frac{\rd B_k}{\rd x_l} - \hbar\frac{\rd C_l}{\rd t_k} 
  + [B_k,C_l] &= 0, \\
  \hbar\frac{\rd C_k}{\rd x_l} - \hbar\frac{\rd C_l}{\rd x_k} 
  + [C_k,C_l] &= 0, \quad k,l=1,2,\ldots, 
\end{align}
for $B_k$ and $C_k = \frakL^k\hbar\rd_s + P_k$. 
These zero-curvature equations ensure the consistency 
of the Lax equations.  
\end{remark}

\begin{remark}
Equating the coefficients of powers of $\Lambda$ 
in both hand sides of the Lax equations 
(\ref{t-redLaxeq}) and (\ref{x-Laxeq}), 
one can derive a system of evolution equations 
for the coefficients $b_1,\cdots,b_N$ of $\frakL$.  
In particular, the $\Lambda^0$ terms give the evolution equations 
\beq
  \frac{\rd b_N}{\rd t_k} = 0, \quad 
  \frac{\rd b_N}{\rd x_k} = {b_N}^k\frac{\rd b_N}{\rd s}, 
  \quad k = 1,2,\ldots, 
  \label{tx-bNeq}
\eeq
for $b_N$.  
These equations in the case of $N = 2$ agree, 
up to rescaling of the time variables, 
with Buryak and Rossi's result \cite[Theorem4.2]{BR18}. 
This is a main piece of evidence suggesting that 
our extended lattice KdV hierarchy will be 
the integrable structure sought for therein. 
\end{remark}

\section{Hirota equations}

As in the case of the extended 1D/bigraded 
Toda hierarchy \cite{Li-etal09,Takasaki10}, 
Hirota equations for the tau function can be derived 
from bilinear equations for the wave functions. 
Let us first consider those bilinear equations 
for the wave functions of the extended lattice GD hierarchy. 

It is convenient to start from the bilinear equations 
\beq
  \oint\frac{dz}{2\pi i}\Psi(s,\bst,\bsx,z)\Psi^*(s',\bst',\bsx,z) = 0 
  \label{lKP-bilineq}
\eeq
of the lattice KP hierarchy that holds under the condition 
\beq
  (s - s')/\hbar \in \ZZ_{\ge 0} = \{0,1,2,\ldots\} 
  \label{ss'-range}
\eeq
and for two independent copies $\bst = (t_k)_{k=1}^\infty$ 
and $\bst' = (t'_k)_{k=1}^\infty$ of the $\bst$-variables.  
$\oint dz/2\pi i$ denotes the residue 
\[
  \oint\frac{dz}{2\pi i}\sum_{n=-\infty}^\infty f_nz^n = f_{-1} 
\]
of formal Laurent series of $z$.  
$\Psi^* = \Psi^*(s,\bst,\bsx,z)$ is the dual wave function 
\[
  \Psi^* = \left(1 + \sum_{n=1}^\infty w^*_nz^{-n}\right)z^{-s/\hbar}
    \exp\left(-\hbar^{-1}\xi(\bst,z) - \hbar^{-1}\xi(\bsx,z^N)\log z\right)
\]
defined in the same way as the dual wave function 
of the 2D Toda hierarchy.  Namely, the amplitude part 
is determined by the formal adjoint 
\[
  W^* = 1 + \sum_{n=1}^\infty \Lambda^n\cdot w_n 
      = 1 + \sum_{n=1}^\infty w_n|_{s\to s+n\hbar}\Lambda^n
\]
of $W$ as 
\[
  1 + \sum_{n=1}^\infty w^*_n\Lambda^n 
  = \Lambda^{-1}(W^*)^{-1}\Lambda. 
\]
(\ref{lKP-bilineq}) is actually the bilinear equations 
of the 2D Toda hierarchy restricted to the range 
(\ref{ss'-range}) of $s$ and $s'$.  

In the reduction to the lattice GD hierarchy, 
the auxiliary linear equations for $t_{kN}$, $k = 1,2,\ldots$, 
trivialize as 
\[
  \hbar\frac{\rd\Psi}{\rd t_{kN}} = \frakL^k\Psi = z^{kN}\Psi. 
\]
Accordingly, one can insert arbitrary positive powers 
of $z^N$ into the bilinear equation (\ref{lKP-bilineq}) as 
\beq
  \oint\frac{dz}{2\pi i}z^{mN}
  \Psi(s,\bst,\bsx,z)\Psi^*(s',\bst',\bsx,z) = 0. 
  \label{lGD-bilineq}
\eeq
These are bilinear equations that characterize 
the lattice GD hierarchy. 

We can now use the technique developed 
for the extended 1D/bigraded Toda hierarchy \cite{Takasaki10} 
to derive the following bilinear equation for the wave functions 
of the extended lattice GD hierarchy.  

\begin{prop}
The bilinear equation 
\beq
  \oint\frac{dz}{2\pi i}z^{mN}
  \Psi(s - \xi(\bsalpha,z^N),\bst,\bsx+\bsalpha,z) 
  \Psi^*(s' - \xi(\bsbeta,z^N),\bst',\bsx+\bsbeta,z) = 0 
  \label{extlGD-bilineq}
\eeq
holds for $m,(s-s')/\hbar\in\ZZ$ and 
two sets $\bsalpha = (\alpha_k)_{k=1}^\infty$, 
$\bsbeta = (\beta_k)_{k=1}^\infty$ of arbitrary constants. 
\end{prop}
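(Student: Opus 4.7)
The plan is to adapt the strategy of \cite{Takasaki10} for the extended 1D/bigraded Toda hierarchies to the present setting. The starting point will be the lattice GD bilinear identity (\ref{lGD-bilineq}). Two new features need to be incorporated: (i) the shifts $s\mapsto s-\xi(\bsalpha,z^N)$, $\bsx\mapsto\bsx+\bsalpha$ inside $\Psi$, and analogously inside $\Psi^*$, and (ii) the enlargement of the range of $m$ and $(s-s')/\hbar$ from non-negative integers to all integers.

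First I would exploit a crucial invariance of the undressed wave function: $\Psi_0(s-\xi(\bsalpha,z^N),\bst,\bsx+\bsalpha,z)=\Psi_0(s,\bst,\bsx,z)$. This is immediate from the identity $z^{-\xi(\bsalpha,z^N)/\hbar}=e^{-\xi(\bsalpha,z^N)\log z/\hbar}$, which exactly cancels the shift in the $\xi(\bsx,z^N)\log z/\hbar$ part of the exponent; the analogous invariance holds on the dual side. Combining this with the dressing relation $\Psi=W\Psi_0$ and the parallel formula for $\Psi^*$, and noting that the $\bsx$-dependent exponential prefactors cancel between $\Psi$ and $\Psi^*$, the left-hand side of (\ref{extlGD-bilineq}) reduces to
\begin{equation*}
\oint\frac{dz}{2\pi i}\,z^{mN+(s-s')/\hbar}\,e^{\xi(\bst-\bst',z)/\hbar}\,\hat{W}_{\bsalpha}(s,\bst,\bsx,z)\,\hat{W}^{*}_{\bsbeta}(s',\bst',\bsx,z),
\end{equation*}
where $\hat{W}_{\bsalpha}(s,\bst,\bsx,z):=\hat{W}(s-\xi(\bsalpha,z^N),\bst,\bsx+\bsalpha,z)$ and $\hat{W}^{*}_{\bsbeta}$ is defined similarly.

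Next I would verify that this residue vanishes. The approach is to expand $\hat{W}_{\bsalpha}$ and $\hat{W}^{*}_{\bsbeta}$ as formal Taylor series in $\bsalpha$ and $\bsbeta$. At zeroth order in both, the identity reduces to (\ref{lGD-bilineq}). At higher orders, each Taylor coefficient involves $s$- and $\bsx$-derivatives of $\hat{W}$; the Sato equations (\ref{x-Satoeq}) and their dual form (\ref{x-dualSatoeq}) rewrite those derivatives in terms of the operators $\frakL^k$, $P_k$, and $P_k^{-}$ acting on $W$. The key compensating mechanism is that $\xi(\bsalpha,z^N)$ and $\xi(\bsbeta,z^N)$ supply arbitrarily high positive powers of $z$, which absorb the $z^{mN+(s-s')/\hbar}$ factor even when $m$ or $(s-s')/\hbar$ is negative; this is precisely what allows the range of those parameters to be enlarged to all of $\ZZ$.

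The hard part will be the systematic bookkeeping required to match the higher-order Taylor terms against consequences of the non-extended identity (\ref{lGD-bilineq}). The cleanest route is to work at the operator level, packaging the shift as a single exponential operator $\exp\bigl(\sum_k\alpha_k(\rd_{x_k}-z^{kN}\rd_s)\bigr)$ acting on $W$, and using the Sato equations to commute this operator past the residue pairing. If this succeeds, one recognizes $W_{\bsalpha}$ as the dressing operator of a ``shifted'' instance of the underlying lattice KP hierarchy, whose bilinear identity (\ref{lKP-bilineq}) then applies directly, in exact parallel with the argument of \cite{Takasaki10}.
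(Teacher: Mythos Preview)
Your core strategy --- exponentiating the vector field $\sum_k\alpha_k(\rd_{x_k}-z^{kN}\rd_s)$ and reducing to the un-extended identity (\ref{lGD-bilineq}) --- is exactly the paper's approach for the $\bsalpha$ shift. The paper, however, applies this operator directly to $\Psi$ rather than to $W$: the auxiliary linear equation (\ref{tx-lineq}) rewrites as $\hbar(\rd_{x_k}-z^{kN}\rd_s)\Psi = Q_k\Psi$ with $Q_k = P_k - \hbar\rd_s\frakL^k$, and the crucial mechanism is that $Q_k$ (and the iterated operators $Q_{m_1,m_2,\ldots}$) contain only non-negative powers of $\Lambda$. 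Their action on $\Psi(s,\ldots)$ is therefore a finite linear combination of $\Psi(s+n\hbar,\ldots)$ with $n\ge 0$, so the range condition $(s-s')/\hbar\in\ZZ_{\ge 0}$ is preserved and each term vanishes by (\ref{lGD-bilineq}). Your plan to work via Sato equations on $W$ is equivalent in principle, but this key point is obscured there; you should make it explicit.

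The paper's handling of the $\bsbeta$ shift is \emph{not} symmetric with the $\bsalpha$ side, and you should not attempt a dual argument on $\Psi^*$. Once the $\bsalpha$-shifted identity is in hand, the paper applies the simultaneous shift $\exp\bigl(-\sum_k\beta_k z^{kN}(\rd_s+\rd_{s'})\bigr)$ to the integrand; this leaves $s-s'$ unchanged, so the range condition survives trivially. A final substitution $\bsalpha\to\bsalpha-\bsbeta$, $\bsx\to\bsx+\bsbeta$ then yields the symmetric form (\ref{extlGD-bilineq}). Your proposed mechanism for enlarging the range of $m$ and $(s-s')/\hbar$ to all of $\ZZ$ via ``high positive powers of $z$'' is not what drives the argument; the paper's proof in fact stays within $\ZZ_{\ge 0}$ throughout (as the statement following (\ref{lGD-Hirotaeq}) confirms), so that part of your plan should be revised or dropped.
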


\begin{proof}
One can rewrite the auxiliary linear equations 
with respect to $x_k$ in (\ref{tx-lineq}) as 
\[
  \hbar\left(\frac{\rd}{\rd x_k} - z^{kN}\frac{\rd}{\rd s}\right)\Psi 
  = Q_k\Psi, 
\]
where 
\[
  Q_k = P_k - \hbar\frac{\rd\frakL^k}{\rd s}. 
\]
These equations can be extended to higher orders as 
\[
  \prod_{k\ge 1}\left(\frac{\rd}{\rd x_k} 
                     - z^{kN}\frac{\rd}{\rd s}\right)^{m_k}\Psi 
  = Q_{m_1,m_2,\ldots}\Psi,\quad 
  m_1,m_2,\ldots \in \ZZ_{\ge 0}, 
\]
where $Q_{m_1,m_2,\ldots}$'s are difference operators 
without negative powers of $\Lambda$.  
Now let $Q_{m_1,m_2,\ldots}$ act on both sides 
of the bilinear equation (\ref{lGD-bilineq}).  
Since this difference operator consists 
of only non-negative powers of $\Lambda$, 
the outcome is a linear combination 
of (\ref{lGD-bilineq}) with $m$ being shifted 
to the positive direction.  Consequently, 
the bilinear equation persists to hold under the action 
of $Q_{m_1,m_2,\ldots}$ as 
\[
  \oint\frac{dz}{2\pi i}z^{mN}Q_{m_1,m_2,\ldots}\Psi(s,\bst,\bsx,z)
    \cdot\Psi^*(s',\bst',\bsx,z)= 0, 
\]
equivalently, 
\[
  \oint\frac{dz}{2\pi i}z^{mN}
  \prod_{k\ge 1}\left(\frac{\rd}{\rd x_k} 
                     - z^{kN}\frac{\rd}{\rd s}\right)^{m_k}
  \Psi(s,\bst,\bsx,z)\cdot\Psi^*(s',\bst',\bsx,z) = 0. 
\]
These equations can be packed 
into a generating functional form with 
the auxiliary variables $\alpha_1,\alpha_2,\ldots$ as 
\[
  \oint\frac{dz}{2\pi i}z^{mN} 
  \exp\left(\sum_{k=1}^\infty \alpha_k
    \left(\frac{\rd}{\rd x_k} - z^{kN}\frac{\rd}{\rd s}\right)\right)
  \Psi(s,\bst,\bsx,z)\cdot\Psi^*(s',\bst',\bsx,z) = 0, 
\]
i.e., 
\[
  \oint\frac{dz}{2\pi i}z^{mN} 
  \Psi(s - \xi(\bsalpha,z^N),\bst,\bsx + \bsalpha,z)
  \Psi^*(s',\bst',\bsx,z) = 0. 
\]
Note that the last equation, too, holds 
for $m, (s - s')/\hbar\in \ZZ_{\ge 0}$. 
Therefore one can apply the shift operator 
\[
  \exp\left(- \sum_{k=1}^\infty \beta_kz^{kN}(\rd_s + \rd_{s'})^k\right)
\]
to the integrand to obtain the further deformed 
bilinear equation
\[
  \oint\frac{dz}{2\pi i}z^{mN} 
  \Psi(s - \xi(\bsalpha+\bsbeta,z^N),\bst,\bsx + \bsalpha,z)
  \Psi^*(s'-\xi(\bsbeta,z^N),\bst',\bsx,z) = 0. 
\]
This equation turns into (\ref{extlGD-bilineq}) 
by substituting $\bsalpha \to \bsalpha - \bsbeta$ 
and $\bsx \to \bsx + \bsbeta$. 
\end{proof}

Lastly, we rewrite the bilinear equation 
(\ref{extlGD-bilineq}) in the language of the tau function.  
As in the case of the extended 1D/bigraded Toda hierarchy 
\cite{Li-etal09,Takasaki10}, 
the wave functions are expressed by the tau function 
$\tau = \tau(\hbar,s,\bst,\bsx)$ as 
\[
\begin{aligned}
  \Psi 
  &= \frac{\tau(\hbar,s,\bst - \hbar[z^{-1}],\bsx)}{\tau(\hbar,s,\bst,\bsx)}
     z^{s/\hbar}\exp\left(\hbar^{-1}\xi(\bst,z) 
       + \hbar^{-1}\xi(\bsx,z^N)\log z\right),\\
  \Psi^* 
  &= \frac{\tau(\hbar,s,\bst + \hbar[z^{-1}],\bsx)}{\tau(\hbar,s,\bst,\bsx)}
     z^{-s/\hbar}\exp\left(- \hbar^{-1}\xi(\bst,z) 
        - \hbar^{-1}\xi(\bsx,z^N)\log z\right).
\end{aligned}
\]
Plugging these expressions into (\ref{extlGD-bilineq}), 
we obtain the bilinear equation 
\begin{align}
  &\oint\frac{dz}{2\pi i}z^{mN+(s-s')/\hbar}e^{\xi(\bst-\bst',z)/\hbar}
   \tau(\hbar,s - \xi(\bsalpha,z^N),\bst - \hbar[z^{-1}],\bsx + \bsalpha)
   \notag \\
  &\qquad\mbox{}\times
   \tau(\hbar,s' - \xi(\bsbeta,z^N),\bst' + \hbar[z^{-1}],\bsx + \bsbeta) 
   = 0 
  \label{lGD-Hirotaeq}
\end{align}
that holds for $m, (s - s')/\hbar \in \ZZ_{\ge 0}$. 
Note that the factors 
$\tau(\hbar,s - \xi(\bsalpha,z^N),\bst,\bsx + \bsalpha)$ 
and $\tau(\hbar,s' - \xi(\bsbeta,z^N),\bst',\bsx + \beta)$ 
arising in the denominator has been removed 
by a simple trick \cite[Section 2.4]{Takasaki10}.  
The bilinear equation (\ref{lGD-Hirotaeq}) 
is a generating functional expression 
of an infinite number of Hirota equations.

\section{Factorization problem}

We now turn to an analogue of the factorization problem 
proposed for solving the extended 1D/bigraded 
Toda hierarchy \cite{Takasaki21}.  
Given an invertible difference operator $U$ of infinite order 
that depends on the spatial coordinate and the time variables 
in a particular manner, the problem is to factorize $U$ 
into two difference operators of the form 
\[
  W = 1 + \sum_{n=1}^\infty w_n\Lambda^{-n},\quad 
  \Wbar = \sum_{n=0}^\infty\wbar_n\Lambda^n 
\]
as 
\beq
  U = W^{-1}\Wbar. 
  \label{factorization}
\eeq
In the case of the extended lattice GD hierarchy, 
$U$ is assumed to satisfy the differential equations 
\beq
  \hbar\frac{\rd U}{\rd t_k} = \Lambda^kU,\quad 
  \frac{\rd U}{\rd x_k} = \Lambda^{kN}\frac{\rd U}{\rd s} 
  \label{tx-Ueq}
\eeq
and the algebraic relation 
\beq
  \Lambda^NU = U\varphi(\Lambda), 
  \label{LamU=Uphi}
\eeq
where $\varphi(\Lambda)$ is a difference operator 
of the form 
\[
  \varphi(\Lambda) = \sum_{n=0}^\infty \varphi_n\Lambda^n. 
\]

\begin{prop}
If $W$ and $\Wbar$ solve the factorization problem 
(\ref{factorization}), $W$ is the dressing operator 
of a solution of the extended lattice GD hierarchy.  
Namely, $\frakL= W\Lambda^NW^{-1}$ takes the reduced form 
(\ref{redLaxop}), and $W$ satisfies the Sato equations 
(\ref{t-redSatoeq}) and (\ref{x-Satoeq}). 
Moreover, $\Wbar$ satisfies the Sato equations 
\beq
  \hbar\frac{\rd\Wbar}{\rd t_k} = B_k\Wbar,\quad 
  \hbar\frac{\rd\Wbar}{\rd x_k} 
  = \frakL^k\hbar\frac{\rd\Wbar}{\rd s} + P_k\Wbar 
  \label{WbarSatoeq}
\eeq
of a slightly different form. 
\end{prop}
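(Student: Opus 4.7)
The plan is to extract all the required Sato equations by differentiating the factorization $U = W^{-1}\Wbar$ along each flow and then separating the resulting identity into its non-negative and strictly negative parts in $\Lambda$. The key structural input is that $W=1+O(\Lambda^{-1})$ (hence $W^{-1}$ too), while both $\Wbar$ and $\Wbar^{-1}$ lie in the completion of operators with only non-negative powers of $\Lambda$; these two ``halves'' give a direct sum decomposition that makes the splitting $(\quad)_{\ge 0}\oplus(\quad)_{<0}$ meaningful on every operator that appears.

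First I would establish the reduced form of $\frakL$. Using $\Lambda^NU = U\varphi(\Lambda)$ together with $U = W^{-1}\Wbar$, a short manipulation gives $\frakL\,\Wbar = \Wbar\,\varphi(\Lambda)$, hence $\frakL = \Wbar\,\varphi(\Lambda)\,\Wbar^{-1}$. Since $\Wbar$, $\varphi(\Lambda)$ and $\Wbar^{-1}$ all have only non-negative powers of $\Lambda$, so does $\frakL$. Comparing with the dressing expression $\frakL = W\Lambda^NW^{-1} = \Lambda^N + O(\Lambda^{N-1})$ then forces $\frakL = \Lambda^N + b_1\Lambda^{N-1} + \cdots + b_N$, i.e.\ the reduced form (\ref{redLaxop}).

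Next I would derive the $t_k$-Sato equations. Differentiating $W^{-1}\Wbar = U$, inserting $\hbar\rd U/\rd t_k = \Lambda^kU$, and conjugating by $W$ on the left and $\Wbar^{-1}$ on the right yields
\[
  -\hbar\frac{\rd W}{\rd t_k}W^{-1} + \hbar\frac{\rd\Wbar}{\rd t_k}\Wbar^{-1} = W\Lambda^kW^{-1} = L^k.
\]
The first summand on the left has only negative powers of $\Lambda$ and the second only non-negative powers, so projecting onto $(\quad)_{\ge 0}$ and $(\quad)_{<0}$ simultaneously produces the Sato equation $\hbar\rd W/\rd t_k = B_kW - W\Lambda^k$ (which reduces to (\ref{t-redSatoeq}) by Step~1) and the companion equation $\hbar\rd\Wbar/\rd t_k = B_k\Wbar$ of (\ref{WbarSatoeq}).

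For the $x_k$-flows I would run the same procedure using $\rd U/\rd x_k = \Lambda^{kN}\rd U/\rd s$; after conjugation one gets
\[
  -\hbar\frac{\rd W}{\rd x_k}W^{-1} + \hbar\frac{\rd\Wbar}{\rd x_k}\Wbar^{-1}
  = -\frakL^k\hbar\frac{\rd W}{\rd s}W^{-1} + \frakL^k\hbar\frac{\rd\Wbar}{\rd s}\Wbar^{-1},
\]
where $W\Lambda^{kN}W^{-1}=\frakL^k$ has been used. Invoking the definition $-\frakL^k\hbar\rd W/\rd s\cdot W^{-1} = P_k + P_k^{-}$ and observing that $\rd W/\rd x_k\cdot W^{-1}$ is purely negative while $\rd\Wbar/\rd x_k\cdot\Wbar^{-1} - \frakL^k\hbar\rd\Wbar/\rd s\cdot\Wbar^{-1}$ is purely non-negative, the $(\quad)_{<0}$ projection gives $\hbar\rd W/\rd x_k = -P_k^{-}W$ (which is (\ref{x-dualSatoeq}), equivalent to (\ref{x-Satoeq}) via $-P_k^{-} = P_k + \frakL^k\hbar\rd W/\rd s\cdot W^{-1}$), and the $(\quad)_{\ge 0}$ projection gives $\hbar\rd\Wbar/\rd x_k = \frakL^k\hbar\rd\Wbar/\rd s + P_k\Wbar$, the second equation of (\ref{WbarSatoeq}). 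The main obstacle is ensuring that the $\ge 0/<0$ splitting remains direct after $\frakL^k$ enters the picture: this is precisely where having already shown in Step~1 that $\frakL$ is a genuine (not pseudo-) difference operator is indispensable, since otherwise multiplication by $\frakL^k$ would not interact cleanly with the decomposition.
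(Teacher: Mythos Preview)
Your proof is correct and follows essentially the same route as the paper: both arguments derive $\frakL=\Wbar\varphi(\Lambda)\Wbar^{-1}$ from (\ref{LamU=Uphi}) to get the reduced form, then differentiate the factorization $U=W^{-1}\Wbar$ along each flow, conjugate by $W$ and $\Wbar^{-1}$, and split the resulting identity into its $(\quad)_{\ge 0}$ and $(\quad)_{<0}$ parts. Your explicit remark that Step~1 (making $\frakL$ a genuine difference operator) is what guarantees $\frakL^k\hbar\rd\Wbar/\rd s\cdot\Wbar^{-1}$ stays in the non-negative half is a point the paper uses tacitly but does not spell out.
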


\begin{proof}
Substituting $U = W^{-1}\Wbar$ in (\ref{LamU=Uphi}) 
yields the algebraic relation 
\beq
  \frakL = W\Lambda^NW^{-1} = \Wbar\varphi(\Lambda)\Wbar^{-1}. 
  \label{frakL-phi-rel}
\eeq
Since the right hand side does not contain 
negative powers of $\Lambda$, $\frakL$ turns out 
to take the reduced form (\ref{redLaxop}). 
In the same way, (\ref{tx-Ueq}) yields the equations 
\[
  \hbar\frac{\rd W}{\rd t_k}W^{-1} + W\Lambda^kW^{-1} 
  = \hbar\frac{\rd\Wbar}{\rd t_k}\Wbar^{-1}
\]
and 
\[
  \hbar\frac{\rd W}{\rd x_k}W^{-1} 
    - W\Lambda^{kN}W^{-1}\hbar\frac{\rd W}{\rd s}W^{-1} 
  = \hbar\frac{\rd\Wbar}{\rd x_k}\Wbar^{-1} 
    - W\Lambda^{kN}W^{-1}\hbar\frac{\rd\Wbar}{\rd s}\Wbar^{-1}, 
\]
which one can rewrite as 
\[
  \hbar\frac{\rd W}{\rd t_k}W^{-1} + \frakL^{k/N} 
  = \hbar\frac{\rd\Wbar}{\rd t_k}\Wbar^{-1}
\]
and 
\[
  \hbar\frac{\rd W}{\rd x_k}W^{-1} 
    - \frakL^k\hbar\frac{\rd W}{\rd s}W^{-1} 
  = \hbar\frac{\rd\Wbar}{\rd x_k}\Wbar^{-1} 
    - \frakL^k\hbar\frac{\rd\Wbar}{\rd s}\Wbar^{-1}. 
\]
Let $B_k$ and $P_k$ be the difference operators 
defined by both hand sides of these equations.  
Since the right hand side have no $(\cdots)_{<0}$ part, 
$B_k$ and $P_k$ are equal to the $(\cdots)_{\ge 0}$ part 
of the left hand side.  One can thus identify 
these operators as 
\[
  B_k 
  = \left(\hbar\frac{\rd W}{\rd t_k}W^{-1} + \frakL^{k/N}\right)_{\ge 0} 
  = (\frakL^{k/N})_{\ge 0} 
\]
and 
\[
  P_k 
  = \left(\hbar\frac{\rd W}{\rd x_k}W^{-1} 
          - \frakL^k\hbar\frac{\rd W}{\rd s}W^{-1}\right)_{\ge 0} 
  = - \left(\frakL^k\hbar\frac{\rd W}{\rd s}W^{-1}\right)_{\ge 0} 
\]
and obtain (\ref{t-redSatoeq}) and (\ref{x-Satoeq}). 
\end{proof}

Although looking arbitrarily given, 
the operator $\varphi(\Lambda)$ 
in the algebraic relation (\ref{LamU=Uphi}) 
has to satisfy a set of differential equations.  
This is a consequence of the differential equations 
(\ref{tx-Ueq}) for $U$. 

\begin{prop}
\beq
  \frac{\rd\varphi(\Lambda)}{\rd t_k} = 0,\quad 
  \frac{\rd\varphi(\Lambda)}{\rd x_k} 
  = \varphi(\Lambda)^k\frac{\rd\varphi(\Lambda)}{\rd s},
  \quad k = 1,2,\ldots. 
  \label{tx-phieq}
\eeq
\end{prop}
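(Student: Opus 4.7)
The plan is to differentiate the algebraic constraint $\Lambda^N U = U\varphi(\Lambda)$ along each flow, substitute the differential equations (\ref{tx-Ueq}) for $U$, and simplify using the constraint itself. Since $U$ is assumed invertible, any identity of the form $UX = 0$ forces $X = 0$, which is what ultimately allows us to extract relations for $\varphi(\Lambda)$ alone.

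For the $t_k$-equation I would differentiate $\Lambda^N U = U\varphi(\Lambda)$ with respect to $t_k$, noting that $\Lambda^N$ is $t_k$-independent, to obtain
\[
  \Lambda^N \hbar\frac{\rd U}{\rd t_k} = \hbar\frac{\rd U}{\rd t_k}\varphi(\Lambda) + U\,\hbar\frac{\rd \varphi(\Lambda)}{\rd t_k}.
\]
Substituting $\hbar\,\rd U/\rd t_k = \Lambda^k U$ and using the constraint once more in the form $\Lambda^k U\varphi(\Lambda) = \Lambda^k\Lambda^N U = \Lambda^{N+k}U$, the two $\Lambda^{N+k}U$ terms on either side cancel, leaving $U\,\hbar\,\rd\varphi(\Lambda)/\rd t_k = 0$. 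Invertibility of $U$ then yields the first half of (\ref{tx-phieq}).

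For the $x_k$-equation the argument is parallel but requires the iterated form $\Lambda^{kN}U = U\varphi(\Lambda)^k$, which follows by an easy induction from the constraint. I would differentiate the constraint separately with respect to $x_k$ and with respect to $s$, giving
\[
  \Lambda^N \frac{\rd U}{\rd x_k} = \frac{\rd U}{\rd x_k}\varphi(\Lambda) + U\,\frac{\rd \varphi(\Lambda)}{\rd x_k}
\]
and the analogous identity with $\rd/\rd s$ in place of $\rd/\rd x_k$. Substituting $\rd U/\rd x_k = \Lambda^{kN}\rd U/\rd s$ into the first identity, and using the second to eliminate $(\rd U/\rd s)\varphi(\Lambda)$, the $\Lambda^{(k+1)N}\rd U/\rd s$ terms cancel and one is left with $U\,\rd\varphi(\Lambda)/\rd x_k = \Lambda^{kN}U\,\rd\varphi(\Lambda)/\rd s$. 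Rewriting $\Lambda^{kN}U$ as $U\varphi(\Lambda)^k$ and cancelling $U$ on the left produces the second half of (\ref{tx-phieq}).

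There is no real obstacle: the proof is a purely formal manipulation of the three given identities for $U$. The only place where a little care is needed is the \emph{ordering}: the product $\varphi(\Lambda)^k\,\rd\varphi(\Lambda)/\rd s$ (as opposed to $(\rd\varphi(\Lambda)/\rd s)\,\varphi(\Lambda)^k$) emerges automatically from iterating $\Lambda^N U = U\varphi(\Lambda)$ on the left, so no commutativity between $\varphi(\Lambda)$ and its $s$-derivative has to be assumed.
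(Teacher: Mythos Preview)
Your proof is correct and follows essentially the same route as the paper: differentiate the constraint $\Lambda^N U = U\varphi(\Lambda)$ in $t_k$ (resp.\ in $x_k$ and $s$), substitute the flow equations (\ref{tx-Ueq}) for $U$, cancel the matching terms, and use the iterated constraint $\Lambda^{kN}U = U\varphi(\Lambda)^k$ together with invertibility of $U$ to isolate $\varphi(\Lambda)$. Your closing remark on operator ordering is a nice addition but not something the paper discusses explicitly.
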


\begin{proof}
Differentiating both hand sides of (\ref{LamU=Uphi}) 
by $t_k$ gives 
\[
  \Lambda^N\hbar\frac{\rd U}{\rd t_k} 
  = \hbar\frac{\rd U}{\rd t_k}\varphi(\Lambda) 
    + U\hbar\frac{\rd\varphi(\Lambda)}{\rd t_k}. 
\]
By (\ref{tx-Ueq}), this equation turns into 
\[
  \Lambda^NU\varphi(\Lambda) 
  = \Lambda^NU\varphi(\Lambda) 
    + U\hbar\frac{\rd\varphi(\Lambda)}{\rd t_k}, 
\]
which implies the first equation of (\ref{tx-phieq}). 
One can derive the second equation of (\ref{tx-phieq}) 
by a similar reasoning as follows. 
Differentiating both hand sides of (\ref{LamU=Uphi}) 
by $x_k$ and $s$ yields the two equations 
\[
  \Lambda^N\frac{\rd U}{\rd x_k} 
  = \frac{\rd U}{\rd x_k}\varphi(\Lambda) 
    + U\frac{\rd\varphi(\Lambda)}{\rd x_k},\quad 
  \Lambda^N\frac{\rd U}{\rd s} 
  = \frac{\rd U}{\rd s}\varphi(\Lambda) 
    + U\frac{\rd\varphi(\Lambda)}{\rd s}. 
\]
By (\ref{tx-Ueq}), the first equation can be rewritten as 
\[
  \Lambda^{N+kN}\frac{\rd U}{\rd s} 
  = \Lambda^{kN}\frac{\rd U}{\rd s}\varphi(\Lambda) 
    + U\frac{\rd\varphi(\Lambda)}{\rd x_k}. 
\]
The second equation, multiplied with $\Lambda^{kN}$,  
becomes 
\[
  \Lambda^{N+kN}\frac{\rd U}{\rd s} 
  = \Lambda^{kN}\frac{\rd U}{\rd s}\varphi(\Lambda) 
    + \Lambda^{kN}U\frac{\rd\varphi(\Lambda)}{\rd s}.
\]
Subtracting this equation from the preceding equation 
and using (\ref{LamU=Uphi}), one obtains 
the second equation of (\ref{tx-phieq}). 
\end{proof}

Let us stress that the process 
from the factorization problem (\ref{factorization}) 
to the solution of the extended lattice GD hierarchy 
is reversible.  Namely, given a solution 
of the extended lattice GD hierarchy, 
one can reconstruct the operators $U$ and 
$\varphi(\Lambda)$ as 
\[
  U = W^{-1}\Wbar, \quad 
  \varphi(\Lambda) = \Wbar^{-1}\frakL\Wbar, 
\]
which turn out to satisfy the differential equation 
(\ref{tx-Ueq}) and the algebraic condition (\ref{LamU=Uphi}).  
Thus the factorization problem can capture all solutions 
of the extended lattice GD hierarchy.  

\begin{remark}
The $\Lambda^0$ part of (\ref{tx-phieq}) gives 
the equations 
\beq
  \frac{\rd\varphi_0}{\rd t_k} = 0,\quad 
  \frac{\rd\varphi_0}{\rd x_k} 
  = {\varphi_0}^k\frac{\rd\varphi_0}{\rd s},\quad 
  k = 1,2,\ldots. 
\eeq
This is in accord with the equation (\ref{tx-bNeq}) 
for $b_N$, because $b_N$ is equal to $\varphi_0$ 
as (\ref{frakL-phi-rel}) implies.  
\end{remark}

\begin{remark}
If the coefficients $\varphi_n$ of $\varphi(\Lambda)$ 
are quasi-constants (i.e., $\hbar$-periodic) with 
respect to $s$, $\varphi(\Lambda)$ commutes with $\Lambda$, 
and the differential equations (\ref{tx-Ueq}) can be 
reduced to the differential equations 
\beq
  \frac{\rd\varphi(z)}{\rd t_k} = 0,\quad 
  \frac{\rd\varphi(z)}{\rd x_k} 
  = \varphi(z)^k\frac{\rd\varphi(z)}{\rd s}, 
  \quad k = 1,2,\ldots
\eeq
for the generating function 
\[
  \varphi(z) = \sum_{n=0}^\infty\varphi_nz^n. 
\]
Moreover, the algebraic relation (\ref{frakL-phi-rel}) 
becomes the relation 
\beq
  \frakL = L^N = \varphi(\Lbar) 
  = \sum_{n=0}^\infty\varphi_n\Lbar^n 
  \label{LLbar-rel}
\eeq
connecting $L$ to the second Lax operator 
\[
  \Lbar = \Wbar\Lambda\Wbar^{-1}. 
\]
of the 2D Toda hierarchy hidden behind. 
\end{remark}

\begin{remark}
Algebraic relations of the form (\ref{LLbar-rel}) 
are commonly used as a reduction condition 
in the 2D Toda hierarchy.  The coefficients 
of such a reduction condition, however, are constants.  
For instance, the bigraded Toda hierarchy of type $(N,\Nbar)$ 
is induced by the reduction condition 
\[
  L^N = \Lbar^{-\Nbar}.
\]
In contrast, the coefficients $\varphi_n$ of (\ref{LLbar-rel}), 
as well as those of (\ref{LamU=Uphi}) and (\ref{frakL-phi-rel}), 
are allowed to depend on the time variables 
of the logarithmic flows.  This is a remarkable feature 
of the logarithmic flows in the extended lattice GD hierarchy. 
\end{remark}

\section{Conclusion}

We have constructed an extension of the lattice GD hierarchy 
with an infinite number of logarithmic flows. 
This system is not a special case 
of the extended bigraded Toda hierarchy \cite{Carlet06} 
(even if the second set of the time variable $(\bst,\bstbar,\bsx)$ 
therein is turned off to $\bstbar = \bszero$).  
It is meaningless to let $\Nbar = 0$ naively 
in the construction of the extended bigraded Toda hierarchy 
of type $(N,\Nbar)$.  Therefore we return 
to the slightly different Lax formalism proposed 
in our previous work \cite{Takasaki10}, and modify 
the generators $P_k$ of logarithmic flows 
to fit into the $\Nbar = 0$ case.  This leads 
to a consistent system of Lax, Sato and Hirota equations 
as well as a factorization problem that captures 
the whole set of solutions of this system. 

The extended lattice GD hierarchy exhibits novel features.  
Firstly, the last term $b_N$ of the Lax operator $\frakL$ 
satisfies the simple evolution equations (\ref{tx-bNeq}) 
that are closed within themselves.  These equations 
will correspond to the extended sector 
of cohomological field theory studied 
by Buryak and Rossi \cite{BR18}. 
Secondly, the operator $U$ in the factorization problem 
(\ref{factorization}) is required to satisfy 
the algebraic condition (\ref{LamU=Uphi}) of an unusual form.  
In particular, the coefficients $\varphi_n$ 
of the operator $\varphi(\Lambda)$ on the right hand side 
are allowed to depend on the time variables 
of the logarithmic flows.  Last but not least, 
the structure of the Lax and Hirota equations 
is different from the extended bigraded Toda hierarchy.

\subsection*{Acknowledgements}

This work is partly supported by the JSPS Kakenhi Grant 
JP18K03350 and JP21K03261.


\begin{thebibliography}{99}

\bibitem{BR18}
A.~Buryak and P.~Rossi, 
Extended r-spin theory in all genera and the discrete KdV hierarchy, 
Adv. Math. {\bf 386} (2021), 107794. 

\bibitem{Dickey-book}
L.~A.~Dickey, 
Soliton equations and Hamilton systems, 2nd edition, 
World Scientific, Singapore, 2003.

\bibitem{EY94}
T.~Eguchi and S.-K.~Yang, 
The topological $CP^1$ model and the large-$N$ matrix integral, 
Mod. Phys. Lett. {\bf A9} (1994), 2893--2902.

\bibitem{EHY95}
T.~Eguchi, K.~Hori, and S.-K.~Yang, 
Topological $\sigma$ models and large-$N$ matrix integral, 
Internat. J. Mod. Phys. {\bf A10} (1995), 4203--4224.

\bibitem{CDZ04}
G.~Carlet, B.~Dubrovin and Y.~Zhang, 
The extended Toda hierarchy,
Moscow Math. J. {\bf 4} (2004), 313-332 and 534. 

\bibitem{Carlet06}
G.~Carlet, 
The extended bigraded Toda hierarchy, 
J. Phys. A: Math. Gen. {\bf 39} (2006), 9411--9435. 

\bibitem{Milanov07}
T.~E.~Milanov, Hirota quadratic equations 
for the extended Toda hierarchy, 
Duke Math. J. {\bf 138} (2007), 161--178.  

\bibitem{Li-etal09}
C.-Z.~Li, J.-S.~He, K.~Wu and Y.~Cheng, 
Tau function and Hirota bilinear equations 
for the extended bigraded Toda hierarchy, 
J. Math. Phys. {\bf 51} (2010), 043514. 

\bibitem{Takasaki10}
K.~Takasaki, 
Two extensions of 1D Toda hierarchy, 
J. Phys. A: Math. Theor. {\bf 43} (2010), 434032. 

\bibitem{DZ03}
B.~Dubrovin and Y.~Zhang, 
Virasoro symmetries of the extended Toda hierarchy, 
Comm. Math. Phys. {\bf 250} (2004), 161--193. 

\bibitem{Milanov06}
T.~E.~Milanov, 
Gromov-Witten theory of $\CC P^1$ and integrable hierarchies, 
arXiv:math-ph/0605001. 

\bibitem{MT08}
T.~E.~Milanov and H.-H.~Tseng, 
The spaces of Laurent polynomials, Gromov-Witten theory 
of $\PP^1$-orbifolds, and integrable hierarchies, 
J. Reine Angew. Math. {\bf 2008} (2008), no. 622, 189--235. 

\bibitem{CvdL13}
G.~Carlet and J.~van~de~Leur, 
Hirota equations for the extended bigraded Toda hierarchy 
and the total descendent potential of $\CC P^1$ orbifolds, 
J. Phys. A: Math. Theor. {\bf 46} (2013), 405205. 

\bibitem{Takasaki18}
K.~Takasaki, 
Toda hierarchies and their applications, 
J. Phys. A: Math. Theor. {\bf 51} (2018) 203001. 

\bibitem{TT95}
K.~Takasaki and T.~Takebe, 
Integrable hierarchies and dispersionless limit, 
Rev. Math. Phys. {\bf 7} (1995), 743--808. 

\bibitem{Takasaki21}
K.~Takasaki, 
Dressing operators in equivariant Gromov-Witten theory 
of $\CC\PP^1$, 
J. Phys. A: Math. Theor. {\bf 54} (2021), 35LT02. 

\end{thebibliography}
\end{document}